\newtheorem{theorem}{Theorem}
\newcommand{\ket}[1]{\left\vert{#1}\right\rangle}
\newcommand{\bra}[1]{\left\langle{#1}\right\vert}
\newcommand{\sprod}[2]{\left\langle{#1}|{#2}\right\rangle}
\newcommand{\scprod}[3]{\langle{#1}|{#2}|{#3}\rangle}
\begin{document}

\title{Orbits of Mutually Unbiased Bases}
\date{}
\author{Kate Blanchfield\footnote{kate@fysik.su.se} \\ {\it Stockholms universitet, Fysikum, S-106 91 Stockholm, Sweden}}

\maketitle

\begin{abstract}
We express Alltop's construction of mutually unbiased bases as orbits under the Weyl-Heisenberg group in prime dimensions and find a related construction in dimensions 2 and 4. We reproduce Alltop's mutually unbiased bases using abelian subgroups of the Clifford group in prime dimensions, in direct analogy to the well-known construction of mutually unbiased bases using abelian subgroups of the Weyl-Heisenberg group. Finally, we prove three theorems relating to the distances and linear dependencies among different sets of mutually unbiased bases.
\end{abstract}

\section{Introduction}

Mutually unbiased bases (MUBs) provide the mathematical formalism behind Bohr's idea of complementarity in quantum mechanics. Complementary observables, as developed by Schwinger \cite{Schwinger}, have eigenbases that are mutually unbiased. In finite dimension $N$, we express this via the condition
\begin{equation}
| \left\langle e_i | f_j \right\rangle |^2 = \frac{1}{N} 
\label{eq:MUB}
\end{equation}
where the vectors $\ket{e_i}$ come from one basis and the vectors $\ket{f_j}$ from another basis. Consequently, if we prepare a quantum state in the first basis and perform a measurement in the second basis, each outcome is equally likely. This makes MUBs useful for a host of practical reasons including quantum cryptography \cite{crypto0, crypto1, crypto2}, state tomography \cite{WF, tomo1, tomo2} and entanglement detection \cite{ent}. A comprehensive review of MUBs can be found in \cite{review}.

One open question about MUBs is how many exist in non-prime power dimensions. If the dimension $N$ is a prime or prime power, we can always find $N+1$ MUBs. This is often called a complete set because it is the maximum number possible \cite{WF} and we shall use the term `complete MUB' to denote such a set. In composite dimensions, we don't know the maximum number of MUBs although a combination of numerical \cite{Stefan1, Berge}, analytical \cite{Jaming, Stefan2} and computer-algebraic \cite{Grassl, Stefan3} work strongly suggests the maximum number in dimension 6 is three.

We can also ask whether different complete MUBs exist. In dimension $N \leq 5$ all complete MUBs are unitarily equivalent \cite{russian, Ingemar2}, but unitarily inequivalent complete MUBs are known in higher dimensions \cite{Kantor}. 
The main focus of this paper is the construction of complete MUBs using cubic functions introduced by Alltop in 1980 \cite{Alltop}. Alltop's original paper was motivated by classical signal processing, but the functions were shown to lead to complete MUBs in prime and odd prime power dimensions \cite{KR2}. Recently, a new family of Alltop functions was found \cite{Hall}.

One might take the attitude that once we have one complete MUB in a given dimension we are not interested in constructing any unitarily equivalent others. Our opinion is that additional constructions---and their resulting complete MUBs---are still interesting as they reveal new structure. For example, different complete sets of MUBs in dimensions 8 and 16 exhibit different amounts of entanglement \cite{Gunnar} and Alltop's MUBs relate to the Clifford group in prime dimensions in a way that other complete MUBs do not. The new constructions also open avenues between different areas of research: the Alltop MUBs we study in this paper appear in generalisations of the `pi over eight' gate, needed for universal quantum computing \cite{Howard}, as well as settling an open question concerning another special quantum measurement called a SIC.

In Section 2 of this paper we introduce relevant groups for the MUB problem: the Weyl-Heisenberg group and the Clifford group. Section 3 looks at complete MUBs in prime dimensions, focusing on the Alltop construction, and Section 4 relates these complete MUBs to the Clifford group. Though Alltop's construction is already known, we hope to present it here in a way we think is particularly simple and which brings it into line with the usual construction by Ivanovi\'{c} \cite{Ivanovic} and Wootters and Fields \cite{WF}. Section 5 investigates distances among bases in complete MUBs from both constructions. We use their status as 2-designs to prove that these bases lie at regular distances from one another when viewed as points in a Grassmannian space. Section 6 looks at linear dependencies among MUB vectors and Section 7 introduces complete MUBs in dimensions $N=2$ and 4 that have a group structure analogous to those from Alltop's construction in prime dimensions. We summarise our results in Section 8. Finally, the appendix links the Alltop MUBs to an open question about SICs in dimension 3.

\section{The Weyl-Heisenberg and Clifford groups}

This section introduces the Weyl-Heisenberg group and the Clifford group. We will work with MUBs in finite dimension $N$ and, with the exception of Section 8, will always use prime $N$. 

The Weyl-Heisenberg (WH) group is generated by the operators $X$ and $Z$ whose action on the computational basis is
\begin{eqnarray}
X \ket{a} = \ket{a+1}
\label{eq:X} \\
Z \ket{a} = \omega \ket{a}
\label{eq:Z}
\end{eqnarray}
where $\omega = e^{\frac{2 \pi i}{N}}$ and addition is modulo $N$. In dimension 2, these are the familiar Pauli spin matrices. We can write a general group element as a displacement operator
\begin{equation}
D_{\bf p} = \tau^{p_1 p_2} X^{p_1} Z^{p_2}
\end{equation}
where $\tau = - e^{\frac{\pi i}{N}}$ and ${\bf p}$ is a 2-component vector whose entries $p_1$ and $p_2$ lie in $\mathbbm{Z}_N \times \mathbbm{Z}_N$, where $\mathbbm{Z}_N$ are integers modulo $N$. We ignore phase factors, which leaves $N^2$ elements in the WH group.

The Clifford group is the group of unitary operations that stabilises the WH group. Disregarding complications with phases, it is isomorphic to the semi-direct product of the WH group and the symplectic group $SL(2,\mathbbm{Z}_{N})$.

The symplectic group consists of all matrices
\begin{equation}
\mathcal{G} = \begin{pmatrix}
\alpha & \beta \\
\gamma & \delta \\
\end{pmatrix} 
\end{equation}
with $\alpha, \beta, \gamma, \delta \in \mathbbm{Z}_N$ and determinant 1 (mod $N$). For each element $\mathcal{G}$, there is a corresponding unitary $U_{\mathcal{G}}$ in the Clifford group. This unitary representation is given by 
\begin{equation}
U_{\mathcal{G}} = \frac{e^{i \theta}}{\sqrt{N}} \sum_{u,v=0}^{N-1} \tau^{\beta^{-1} ( \delta u^2 - 2uv + \alpha v^2 )}
\label{eq:marcus}
\end{equation}
where $e^{i \theta}$ is an arbitrary phase to be determined by the order or $U_{\mathcal{G}}$ \cite{Marcus}. If $\beta$ does not have a multiplicative inverse then we must use a decomposition
\begin{equation}
\mathcal{G} = \mathcal{G}_1 \mathcal{G}_2 = 
\begin{pmatrix}
\alpha_1 & \beta_1 \\
\gamma_1 & \delta_1 
\end{pmatrix} 
\begin{pmatrix}
\alpha_2 & \beta_2 \\
\gamma_2 & \delta_2 
\end{pmatrix} 
\end{equation}
together with Eq. (\ref{eq:marcus}) to calculate $U_{\mathcal{G}}$ from
\begin{equation}
U_{\mathcal{G}} = U_{\mathcal{G}_1} U_{\mathcal{G}_2} 
\end{equation}
up to an overall phase \cite{Marcus}. We can express the action of the Clifford unitaries on the WH group as
\begin{equation}
U_{\mathcal{G}}^{\dagger} D_{\bf p} U_{\mathcal{G}} = D_{\mathcal{G} {\bf p}} .
\end{equation}
Again ignoring phases, any Clifford group element can be expressed as 
\begin{equation}
D_{\bf p} U_{\mathcal{G}} .
\label{eq:clifford_element}
\end{equation}
We shall be interested in elements of order $N$, which requires $U_{\mathcal{G}}$ to also be order $N$. For prime $N$, Sylow's theorems state that if $N$ divides the order of a group we are guaranteed subgroups of order $N$. The order of the symplectic group is $N(N^2-1)$ and we find $N+1$ order $N$ subgroups. This gives $(N+1)(N-1)$ elements of the form $U_{\mathcal{G}}$. Recall that there are $N^2$ displacement operators. Combining these gives $N^2(N^2-1)$ order $N$ Clifford elements (ignoring the elements that also appear in the WH group, i.e. elements with $U_{\mathcal{G}}=\mathbbm{1}$), or, equivalently, $N^2(N+1)$ order $N$ subgroups.

A subset of the Clifford group can be written as Weyl-Heisenberg translates of the form
\begin{equation}
D_{\bf p} U_{\mathcal{G}} D_{\bf p}^{-1} .
\end{equation}
The translates have the same degenerate spectra as $U_{\mathcal{G}}$. We can count how many translates exist in the Clifford group. If we pick the matrix
\begin{equation}
\mathcal{S} = 
\left(
\begin{array}{cc}
1 & 0 \\
1 & 1 \\
\end{array}
\right)
\end{equation}
then the corresponding unitary has matrix components given by
\begin{equation}
\left( U_{\mathcal{S}} \right)_{mn} = \omega^{\frac{m^2}{2}} \delta_{mn} .
\label{eq:S}
\end{equation}
Note that the factor $\frac{1}{2}$ denotes the multiplicative inverse of 2 in $\mathbbm{Z}_N$. The unitary $U_{\mathcal{S}}$ is diagonal and commutes with $Z$, meaning the Clifford element $Z U_{\mathcal{S}} Z^{-1}$ is equal to $U_{\mathcal{S}}$. Only the translates involving $X$ give a new Clifford element, of which there are $N$ for each choice of $U_{\mathcal{S}}$ (as we are also interested in the cases where $D_{\bf p}=\mathbbm{1}$). In total, we find $N(N^2-1)$ WH translates.

This leaves $N(N-1)(N^2-1)$ Clifford elements of order $N$ that cannot be written as translates. We shall see in Section 4 that these play a crucial role in constructing complete MUBs from Alltop's functions.

\section{Orbit complete MUBs from the Weyl-Heisenberg group}

Ivanovi\'{c} published the first explicit construction of MUBs in prime dimensions \cite{Ivanovic}. We quickly go through his construction in the language of Bandyopadhyay {\it et al} \cite{B}. This method partitions the WH group into $N+1$ maximally abelian subgroups and then takes the joint eigenbasis defined by each subgroup. These eigenbases are mutually unbiased and so form a complete MUB.

In our representation of the WH group the eigenbasis of $Z$ is the computational basis. There is no unitary operation that relates all the $N+1$ bases---often called a MUB cycler---in odd prime dimensions\footnote{In dimension $N=4k+3$, there is an anti-unitary that does the job \cite{Marcus_09}.}, but the combination of two Clifford unitaries can do it: the Fourier matrix $U_F$, defined by
\begin{equation}
(U_F)_{mn} = \frac{1}{\sqrt{N}} \omega^{mn} ,
\label{eq:fourier}
\end{equation}
and the order $N$ unitary $U_{\mathcal{S}}$ defined in Equation (\ref{eq:S}). Acting with $U_F$ rotates the computational basis into a mutually unbiased basis (sometimes called the Fourier basis) while repeatedly acting with $U_{\mathcal{S}}$ on this second basis rotates it into the remaining $N-1$ mutually unbiased bases. This complete MUB forms an orbit under the Clifford group and we refer to it as the standard complete MUB. 

The construction arising from Alltop's cubic functions \cite{Alltop} produces a complete MUB in prime dimensions, which is unitarily equivalent to the standard one. Alltop gave an explicit expression for a fiducial vector whose orbit under the WH group collects into $N$ MUBs. They are also mutually unbiased to the computational basis, so together with this basis the orbit forms a complete MUB. We then have two complete MUBs---the standard one and the orbit one---that both include the computational basis. We shall sometimes say that they overlap at the computational basis.

Alltop's construction results in further orbit complete MUBs. Firstly, the expression for the fiducial vector can be modified to produce additional fiducial vectors. The expression is then
\begin{eqnarray}
\ket{f_x} = \frac{1}{\sqrt{N}} \sum_{a=0}^{N-1} \sigma^{a x^3} \ket{a} \quad \mbox{ for} \quad N=3
\label{eq:fiducial_3} \\
\ket{f_x} = \frac{1}{\sqrt{N}} \sum_{a=0}^{N-1} \omega^{x a^3} \ket{a} \quad \mbox{ for} \quad N>3
\label{eq:fiducial}
\end{eqnarray}
where $\sigma=e^{\frac{2 \pi i}{9}}$, $\omega=e^{\frac{2 \pi i}{N}}$ and $x = \{1, \ldots, N-1 \}$. Note the different position of the cubic power in the two fiducials. This means that there are $N-1$ fiducial vectors, labelled by $x$, that each produce an orbit complete MUB under the action of the WH group when combined with the computational basis. In other words, we now have $N-1$ orbit complete MUBs that overlap at the computational basis.
Secondly, the same rotations that permuted the bases within the standard complete MUB can be used to permute entire orbit complete MUBs into new orbit complete MUBs. This generates $(N-1)(N+1)=N^2-1$ orbit complete MUBs in total using $N(N^2-1)$ bases (excluding those found in the standard complete MUB). Their behaviour under the Clifford group depends on dimension. When $N=3$ or $N=3k+2$, the orbit complete MUBs form a single orbit under the Clifford group. When $N=3k+1$, they split into 3 Clifford orbits \cite{HC}.

There is an order $N$ unitary matrix not in the Clifford group that relates $N$ of these complete MUBs---always including the standard one. If we begin with the standard complete MUB, we can write down a diagonal unitary that takes the bases into those in the first orbit complete MUB. Its entries come from Equation (\ref{eq:fiducial}), i.e.
\begin{equation}
\begin{pmatrix}
1 &  &  &  \\
 & \omega &  &  \\
 &  & \omega^{2^3} &  \\
 &  &  & \omega^{3^3} &  \\
 &  &  &  &\ddots \\
\end{pmatrix} .
\end{equation}
The matrix will cycle through each orbit complete MUB that overlaps at the computational basis until, after $N$ applications, it returns to the standard complete MUB. We can similarly relate complete MUBs that overlap at other bases. The situation in dimension 3 is slightly different, as we explain below.

\subsection*{Dimension 3 example}

We take dimension 3 as an example, which contains the following (unnormalised) vectors in the standard complete MUB.
\begin{equation*}
\begin{bmatrix}
1 & 0 & 0 \\
0 & 1 & 0 \\
0 & 0 & 1 \\
\end{bmatrix}
\quad
\begin{bmatrix}
1 & 1 & 1 \\
1 & \omega & \omega^2 \\
1 & \omega^2 & \omega \\
\end{bmatrix}
\quad
\begin{bmatrix}
1 & \omega^2 & \omega^2 \\
\omega^2 & 1 & \omega^2 \\
\omega^2 & \omega^2 & 1 \\
\end{bmatrix}
\quad
\begin{bmatrix}
1 & \omega & \omega \\
\omega & 1 & \omega \\
\omega & \omega & 1 \\
\end{bmatrix} 
\label{eq:standard_set_3}
\end{equation*}
The MUBs in this set are related via the unitaries
\begin{equation}
U_F = \frac{1}{\sqrt{3}} \begin{pmatrix}
1 & 1 & 1 \\
1 & \omega & \omega^2 \\
1 & \omega^2 & \omega 
\end{pmatrix} 
\end{equation}
and
\begin{equation}
U_S = \begin{pmatrix}
1 & 0 & 0 \\
0 & \omega^2 & 0 \\
0 & 0 & \omega^2 
\end{pmatrix} .
\end{equation}
In $N=3$, there are two Alltop fiducials, given by Equation (\ref{eq:fiducial_3}). We start with the first one, i.e. $x=1$, and write down the 3 MUBs we obtain from the orbit under the WH group. Together with the computational basis, they form a complete MUB.
\begin{equation*}
\begin{bmatrix}
1 & 0 & 0 \\
0 & 1 & 0 \\
0 & 0 & 1 \\
\end{bmatrix}
\quad
\begin{bmatrix}
1 & 1 & 1 \\
\sigma & \sigma^4 & \sigma^7 \\
\sigma^2 & \sigma^8 & \sigma^5 \\
\end{bmatrix}
\quad
\begin{bmatrix}
1 & 1 & 1 \\
\sigma^7 & \sigma & \sigma^4 \\
\sigma^8 & \sigma^5 & \sigma^2 \\
\end{bmatrix}
\quad
\begin{bmatrix}
1 & 1 & 1 \\
\sigma & \sigma^4 & \sigma^7 \\
\sigma^8 & \sigma^5 & \sigma^2 \\
\end{bmatrix}
\label{eq:orbit_set1_3}
\end{equation*}
Note that the fiducial appears as the first vector in the second basis. Acting with $Z$ permutes vectors within a basis, while acting with $X$ permutes vectors between bases.
Similarly, the second fiducial in Equation (\ref{eq:fiducial_3}), i.e. $x=2$, produces three MUBs under the action of the WH group. Combined with the computational basis, they form another complete MUB.
\begin{equation*}
\begin{bmatrix}
1 & 0 & 0 \\
0 & 1 & 0 \\
0 & 0 & 1 \\
\end{bmatrix}
\quad
\begin{bmatrix}
1 & 1 & 1 \\
\sigma^2 & \sigma^5 & \sigma^8 \\
\sigma^4 & \sigma & \sigma^7 \\
\end{bmatrix}
\quad
\begin{bmatrix}
1 & 1 & 1 \\
\sigma^5 & \sigma^8 & \sigma^2 \\
\sigma^7 & \sigma^4 & \sigma \\
\end{bmatrix}
\quad
\begin{bmatrix}
1 & 1 & 1 \\
\sigma^2 & \sigma^5 & \sigma^8 \\
\sigma^7 & \sigma^4 & \sigma \\
\end{bmatrix}
\label{eq:orbit_set2_3}
\end{equation*}
The standard complete MUB and these two orbit complete MUBs overlap at the computational basis. We can find other non-overlapping orbit complete MUBs by acting with the $U_{F}$ and $U_{S}$ operators.
The bases in the standard complete MUB are related by the $U_F$ matrix and two applications of the $U_S$ matrix. If we apply the $U_F$ matrix to the two orbit complete MUBs given above, we find two more orbit complete MUBs. These then overlap at the second (Fourier) basis in the standard complete MUB. Likewise, if we apply the $U_S$ matrix to these we find two further orbit complete MUBs overlapping at the third basis in the standard complete MUB, and finally applying $U_S$ again produces two more orbit complete MUBs that overlap at the fourth basis in the standard complete MUB. In total for $N=3$ we have 8 orbit complete MUBs comprised from 24 individual bases plus the 4 in the standard complete MUB. 

We can cycle through overlapping complete MUBs using a unitary transformation. The unitary matrix is order 9 in dimension 3 and actually cycles through the vectors in each basis as well as the bases in different complete MUBs. For the three complete MUBs that overlap at the computational basis, the matrix is
\begin{equation}
\begin{pmatrix}
1 & & \\
 & \sigma & \\
 & & \sigma^2 \\
\end{pmatrix} .
\end{equation}
Starting from the first vector in the Fourier basis, the matrix has the effect
\begin{eqnarray}
\begin{pmatrix}
1 \\ 1 \\ 1
\end{pmatrix}
\rightarrow
\begin{pmatrix}
1 \\ \sigma \\ \sigma^2
\end{pmatrix}
\rightarrow
\begin{pmatrix}
1 \\ \sigma^2 \\ \sigma^4
\end{pmatrix}
\rightarrow
\begin{pmatrix}
1 \\ \omega\\ \omega^2
\end{pmatrix}
\rightarrow
\begin{pmatrix}
1 \\ \sigma^4 \\ \sigma^8
\end{pmatrix}
\rightarrow
\begin{pmatrix}
1 \\ \sigma^5 \\ \sigma
\end{pmatrix}
\rightarrow
\begin{pmatrix}
1 \\ \omega^2 \\ \omega
\end{pmatrix}
\rightarrow
\nonumber \\
\rightarrow
\begin{pmatrix}
1 \\ \sigma^7 \\ \sigma^5
\end{pmatrix}
\rightarrow
\begin{pmatrix}
1 \\ \sigma^8 \\ \sigma^7
\end{pmatrix}
\rightarrow
\begin{pmatrix}
1 \\ 1 \\ 1
\end{pmatrix}
\nonumber .
\end{eqnarray}
We see that it relates the vectors from one basis in each of the three complete MUBs in dimension 3.

\section{Orbit complete MUBs from the Clifford group}

Bandyopadhyay {\it et al.} showed that in Ivanovi\'{c}'s construction each basis in the standard complete MUB is an eigenbasis of a WH group element of the form $D_{\bf p}$ \cite{B}. In this section we make a similar statement about Alltop's construction, namely that each basis in an orbit complete MUB is an eigenbasis of a Clifford group element of the form $D_{\bf p}U_{\mathcal{G}}$, where $\mathcal{G}$ is order $N$.

While the vectors in the standard complete MUB are left invariant by an element of the WH group, the vectors in an orbit complete MUB are permuted under the action of the WH group: the operator $Z$ moves vectors within a basis and the operator $X$ moves vectors between bases. However, the vectors in the orbit complete MUB are left invariant by an element of the Clifford group. One way to see this is to look at the matrix components for the equation
\begin{equation}
\omega^k D_{ij} U_{\mathcal{S}} \ket{f_x} = \ket{f_x} . 
\label{eq:invariance}
\end{equation}
Using the following expression for the displacement operator
\begin{equation}
D_{ij} = \tau^{ij+2bj} \delta_{a,b+i} 
\end{equation}
together with Equations (\ref{eq:S}) and (\ref{eq:fiducial}) we find, for the case $N>3$,
\begin{equation}
\sum_{b,c} \omega^{k} \omega^{\frac{ij}{2}+bj} \delta_{a,b+i} \omega^{\frac{c^2}{2}} \delta_{b,c} \omega^{x c^3} = \omega^{x a^3}
\end{equation}
for some choice of $x$ in the Alltop fiducial. Solving this for the three parameters $i$, $j$ and $k$ will give the matrix that leaves $\ket{f_x}$ invariant. After summing, we find the solutions
\begin{equation}
i = \frac{1}{6x} \quad, \quad j = \frac{1}{12x} \quad , \quad
k = - \frac{1}{432x^2} .
\label{eq:solve_for_invariance}
\end{equation}
Recall $x$ is fixed by our choice of Alltop fiducial to investigate.
This shows that one particular Alltop fiducial is left invariant by a Clifford group element and consequently the remaining vectors in the orbit complete MUBs must also be left invariant by a Clifford group element. The Clifford group contains the WH group as a subgroup, so the orbit and standard complete MUBs are now on somewhat equal footing in terms of Clifford group invariance.

In Section 2, we partitioned the Clifford group into elements that could be written as WH translates and those that could not. The bases in the orbit complete MUBs are left invariant by an order $N$ Clifford element of the second type. We can see this by comparing the number of bases in the orbit complete MUBs with the number of order $N$ subgroups of the second type of Clifford element. We showed in Section 2 that there are $N^2(N+1)$ order $N$ subgroups in the Clifford group and $N(N+1)$ subgroups of translates, which leaves $N(N^2-1)$ subgroups that cannot be written in translate form. This is precisely the number of bases in the orbit complete MUB construction. So we conclude that every order $N$ Clifford element that cannot be written as a WH translate has an eigenbasis that is a basis in an orbit complete MUB.

This is in direct analogy to the standard complete MUB, where the bases are eigenbases of subgroups of the WH group. Here, the orbit complete MUBs contain bases that are eigenbases of order $N$ subgroups of the Clifford group.

\subsection*{Dimension 3 example}

We can look at how this works in dimension 3. First, we want to know which Clifford element leaves the fiducial vectors invariant. In this example, the method outlined above runs into difficulties, but as the Clifford group is fairly manageable in dimension 3 we can search directly. The first fiducial vector is invariant under the Clifford element 
\begin{equation}
U_{\mathcal{G}} = \sigma^2 X^2 S = 
\begin{pmatrix}
0 & \sigma^8 & 0 \\
0 & 0 & \sigma^8 \\
\sigma^2 & 0 & 0 \\
\end{pmatrix} .
\end{equation}
The second fiducial is invariant under the Clifford element 
\begin{equation}
U_{\mathcal{G}} = \sigma^4 X^4 S^2 = 
\begin{pmatrix}
0 & \sigma^7 & 0 \\
0 & 0 & \sigma^7 \\
\sigma^4 & 0 & 0 \\
\end{pmatrix}  .
\end{equation}

In $N=3$, there are 72 Clifford elements (not counting WH group elements). Of these, 24 can be written as WH translates, leaving 48 that cannot. These latter elements collect into 24 subgroups of order 3, which matches the number of individual bases in the 8 orbit complete MUBs.

\section{Distances between bases}

Given all these MUBs we can ask where they sit relative to one another. To do this, we consider each basis as a point in a Grassmannian space and compute distances between these points. We first introduce the so-called chordal Grassmannian distance developed in \cite{distance, Ingemar} and then use it to prove two theorems about the distance between bases from different complete MUBs.

As an alternative to Hilbert space, we can picture quantum state space in terms of density operators in what is sometimes called Bloch space: the space of all hermitian matrices with unit trace. This is an $(N^2-1)$--dimensional space which we can treat as a vector space with the maximally mixed state $\rho_{*}=\frac{1}{N} \mathbbm{1}$ at the origin. The set of density matrices forms a convex body in this space, while the projectors corresponding to pure states span a $2(N-1)$--dimensional continuous subspace. A vector in this space is then a traceless matrix, whose explicit construction is given by
\begin{equation}
{\bf e} = \ket{e} \bra{e} - \rho_{*} .
\end{equation}

An orthonormal basis in Hilbert space corresponds to a regular $(N-1)$--simplex in Bloch space and each simplex spans a unique $(N-1)$--plane through the origin.
For two MUBs the two planes are totally orthogonal, so the equations
\begin{equation}
| \left\langle e_i | f_j \right\rangle |^2 = \frac{1}{N}
\end{equation}
and
\begin{equation}
{\bf e}_i \cdot {\bf f}_j = 0
\end{equation}
express the same condition, where $\ket{e_i}$ and $\ket{f_j}$ are vectors in Hilbert space and ${\bf e}_i$ and ${\bf f}_j$ are their corresponding vectors in Bloch space. By totally orthogonal planes we mean that every vector in one plane is orthogonal to every vector in the other plane. This leads to a nice geometrical argument for the upper limit of $N+1$ MUBs in a complete MUB; since the dimension of Bloch space is $N^2-1=(N+1)(N-1)$ we can't fit more than $N+1$ totally orthogonal planes in it. 

There is a natural way to compute distances between planes in Bloch space by considering them as points in a Grassmannian space. We perform the same trick as when we moved from Hilbert space to Bloch space, where the vectors are now formed from projectors onto the $(N-1)$--planes. We may then use the squared chordal Grassmannian distance \cite{distance, Ingemar}, given in terms of our Hilbert space vectors by
\begin{equation}
D^{2}_{c} = 1 - \frac{1}{N-1} \sum_{i=0}^{N-1} \sum_{j=0}^{N-1} \left( | \left\langle e_i | f_j \right\rangle |^2 - \frac{1}{N} \right)^2 .
\label{eq:distance}
\end{equation}
It is called chordal because the $(N-1)$--planes can be considered as points lying on a sphere in a high dimensional Euclidean space (see \cite{Ingemar} for more details). The distance has the range
\begin{equation}
0 \leq D^{2}_{c} \leq 1 
\end{equation}
where the minimum occurs for identical bases and the maximum for MUBs. This concept of distance was used to search for MUBs in dimension 6 by maximising the value of $D^{2}_{c}$ between bases \cite{Berge}.

We can ask for the average distance between two bases in this space. It is calculated in \cite{Ingemar} by taking the computational basis and one chosen at random according to the Fubini-Study measure and results in an average distance of
\begin{equation}
\left\langle D^{2}_{c} \right\rangle = \frac{N}{N+1} .
\label{eq:average}
\end{equation}

We need one more result before we look at the distances between bases: complete MUBs are 2-designs \cite{Barnum, KR1}. Given a function on $\mathbbm{C}^N$ that is homogeneous of order 2 in both its coordinates and their complex conjugates, averaging over a particular set of projective points will give the same value as averaging over the whole space. The particular set of points for which this happens is called a 2-design. If we fix one basis then the chordal Grassmannian distance in Equation (\ref{eq:distance}) is a function of the above form. Evaluating the distance over a complete MUB, or any other 2-design, will then give the same result as the average distance in Equation (\ref{eq:average}). We discuss the two distances separately.

\begin{theorem}
Two distinct bases lying in overlapping complete MUBs, either from the standard or orbit construction, are separated by a distance of
\begin{equation*}
D^{2}_{c,1} = \frac{N-1}{N} .
\end{equation*}
\end{theorem}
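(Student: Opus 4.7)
The plan is to reduce the theorem to the evaluation of a single fourth-moment sum and then extract the value from the 2-design identity. Expanding the squared bracket in Equation~(\ref{eq:distance}) gives
\begin{equation*}
D^{2}_{c}(E,F) \;=\; 1 - \frac{1}{N-1}\Bigl(S(E,F) - 1\Bigr), \qquad S(E,F) := \sum_{i,j}|\sprod{e_i}{f_j}|^{4},
\end{equation*}
so the theorem is equivalent to the claim that $S(B,B') = (2N-1)/N$. The 2-design property, recalled just before the statement, says that for any unit vector $\ket{\psi}$ and any complete MUB $\mathcal{M}$ one has $\sum_{\phi\in\mathcal{M}}|\sprod{\phi}{\psi}|^{4} = 2$.

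Write the two overlapping complete MUBs as $\mathcal{M}_1 = \{B_0, B_1,\ldots,B_N\}$ and $\mathcal{M}_2 = \{B_0, B'_1,\ldots,B'_N\}$, with $B_0$ the shared basis, and fix $B \in \mathcal{M}_1\setminus\{B_0\}$. Because $B$ is mutually unbiased to $B_0$ one has $S(B,B_0) = N^{2}\cdot(1/N)^{2} = 1$. Applying the 2-design identity to each of the $N$ vectors of $B$ and peeling off the $B_0$-contribution from $\mathcal{M}_2$ yields
\begin{equation*}
\sum_{B' \in \mathcal{M}_2 \setminus \{B_0\}} S(B, B') \;=\; 2N - 1,
\end{equation*}
so the mean of $S(B,B')$ over the $N$ non-overlap bases is $(2N-1)/N$, which already translates to a mean chordal distance of exactly $(N-1)/N$.

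To upgrade this average to every individual pair I would exploit the Weyl--Heisenberg symmetry built into the constructions. In any orbit complete MUB the operator $X$ cyclically permutes the bases $B'_1,\ldots,B'_N$ while stabilising $B_0$ set-wise. When $\mathcal{M}_1$ is the standard complete MUB, a short computation using $XU_F = U_F Z$ and the presentation $B_k = U_{\mathcal{S}}^{k-1}U_F B_0$ shows that $X\ket{b_{k,j}}$ is a unit phase times another vector of $B_k$, so $X$ likewise stabilises $B$ set-wise. Unitarity of $X$ then gives $D^{2}_{c}(B,B'_{l}) = D^{2}_{c}(XB,XB'_{l}) = D^{2}_{c}(B,B'_{l+1})$, so the $N$ distances agree and each must equal the mean $(N-1)/N$.

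The main obstacle is the case in which $\mathcal{M}_1$ is itself an orbit complete MUB: then $X$ also cycles $\mathcal{M}_1\setminus\{B_0\}$, and the invariance argument only establishes that $S(B_k,B'_l)$ depends on the offset $l-k\in\mathbbm{Z}_N$, leaving $N$ a priori distinct values whose average is pinned down. To close this case I would substitute the explicit diagonal unitary from Equation~(\ref{eq:fiducial}) relating the two orbit complete MUBs into $S(B,B')$, rewriting it as a Gauss-type character sum in the Alltop coefficients $\omega^{x a^{3}}$ whose modulus can be shown to be independent of the offset, thereby forcing each $S(B,B')$ to equal the common value $(2N-1)/N$.
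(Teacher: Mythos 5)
Your argument is built from the same two ingredients as the paper's proof---the 2-design property of a complete MUB and the Weyl--Heisenberg symmetry relating the bases---just assembled in the opposite order: you first pin down the \emph{average} of the fourth-moment sum $S(B,B')$ over the $N$ non-shared bases of $\mathcal{M}_2$ (your identity $\sum_{B'} S(B,B') = 2N-1$ is correct, as is the algebraic reduction $D^2_c = 1 - \frac{1}{N-1}(S-1)$), and then use conjugation by $X$ to show all $N$ distances coincide, whereas the paper first shows equality of the distances by rewriting scalar products under $Z$ and $X$ and then extracts the common value from the 2-design average. For the case where one complete MUB is the standard one, your version is complete and arguably cleaner than the paper's explicit scalar-product manipulations.

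The genuine gap is exactly where you flag it: two overlapping \emph{orbit} complete MUBs. There the $X$-conjugation only shows $S(B_k,B'_l)$ depends on the offset $l-k$, and the route you propose to close it---evaluating $S$ as a cubic character sum and arguing its ``modulus is independent of the offset''---is both unsubstantiated and harder than it needs to be: the modulus of an individual Weil sum $\sum_a \omega^{ca^3+ba^2+da}$ does in general depend on the lower-order coefficients, so you would have to control the full fourth-moment sum, not individual terms, and you give no argument for that. The paper closes this case in one line by unitary equivalence, and the unitary is already at hand in Section~3: the diagonal matrix with entries $\omega^{xa^3}$ (or $\sigma^{xa^3}$ for $N=3$) sends the orbit complete MUB with parameter $x'$ to the one with parameter $x'+x$ and the standard complete MUB to the one with parameter $x$; applying its inverse power $x'$ maps the pair (orbit-$x'$, orbit-$x''$) isometrically onto the pair (standard, orbit-$(x''-x')$), which is the case you have already settled. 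Clifford unitaries then transport the result to pairs overlapping at a basis other than the computational one. Replacing your character-sum sketch with this reduction would make the proof complete.
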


\begin{proof}
We proceed in two steps. Firstly, we prove that the distance from one basis in the standard complete MUB to every basis in an orbit complete MUB is the same (not including the shared basis). Secondly, we prove that this distance is the same regardless of which pair of overlapping complete MUBs we choose.

In step one, we need to calculate the distance between a basis in the standard complete MUB and a basis in an orbit complete MUB. From Equation (\ref{eq:distance}), this requires $N^2$ scalar products for which we might expect $N^2$ distinct values, but we find only $N$. This is because $Z$ relates vectors within a basis for both complete MUBs, so all scalar products can be expressed in terms of just one vector from the standard complete MUB basis.\footnote{Alternatively, we could have rewritten the scalar products in terms of just one basis vector in the orbit complete MUB.} For example, if the standard complete MUB basis is labelled by the vectors $\left\{ \ket{s_i} \right\}_{1}^{N}$, related by $Z \ket{s_i} = \ket{s_{i+1}}$, and the orbit complete MUB basis is labelled by $\left\{ \ket{a_i} \right\}_{1}^{N}$ and similarly related by $Z \ket{a_i} = \ket{a_{i+1}}$ (all arithmetic modulo $N$), then we can rewrite the following scalar products as
\begin{align}
\sprod{s_1}{a_1} \quad , \quad &\sprod{s_2}{a_1} = \scprod{s_1}{Z^{\dag}}{a_1} = \sprod{s_1}{a_N} \quad , \quad \ldots
\notag \\
\sprod{s_1}{a_2} \quad , \quad &\sprod{s_2}{a_2} = \scprod{s_1}{Z^{\dag}}{a_2} = \sprod{s_1}{a_1} \quad , \quad \ldots
\label{eq:scalprod1} \\
\sprod{s_1}{a_3} \quad , \quad &\sprod{s_2}{a_3} = \scprod{s_1}{Z^{\dag}}{a_3} = \sprod{s_1}{a_2} \quad , \quad \ldots
\notag 
\end{align}
and so on. They reduce to only $N$ possibilities, all in terms of the vector $\ket{s_1}$.

Turning to the distance between the standard complete MUB basis and a second basis in the orbit complete MUB, we can play a similar trick. Let the vectors in the second orbit complete MUB basis be labelled by $\left\{ \ket{b_i} \right\}_{0}^{N-1}$ and again related by $Z \ket{b_i} = \ket{b_{i+1}}$. As above, we find only $N$ distinct scalar products of the form
\begin{align}
\sprod{s_1}{b_1} \quad , \quad &\sprod{s_2}{b_1} = \scprod{s_1}{Z^{\dag}}{b_1} = \sprod{s_1}{b_N} \quad , \quad \ldots
\notag \\
\sprod{s_1}{b_2} \quad , \quad &\sprod{s_2}{b_2} = \scprod{s_1}{Z^{\dag}}{b_2} = \sprod{s_1}{b_1} \quad , \quad \ldots
\\
\sprod{s_1}{b_3} \quad , \quad &\sprod{s_2}{b_3} = \scprod{s_1}{Z^{\dag}}{b_3} = \sprod{s_1}{b_2} \quad , \quad \ldots
\notag 
\end{align} 
and so on.

Recall that $X$ relates vectors within a basis in the standard complete MUB and vectors between bases in the orbit complete MUBs. For argument's sake, say $X \ket{s_i} = \ket{s_{i+1}}$ and $X \ket{a_i} = \ket{b_i}$. We can therefore express the scalar products involving the second orbit basis in terms of the first orbit basis as follows
\begin{align}
\sprod{s_1}{b_1} = \scprod{s_1}{X}{a_1} = \sprod{s_N}{a_1} 
= \scprod{s_1}{(Z^{\dagger})^{N-1}}{a_1} = \sprod{s_1}{a_2}
\notag \\
\sprod{s_1}{b_2} = \scprod{s_1}{X}{a_2} = \sprod{s_N}{a_2}
= \scprod{s_1}{(Z^{\dagger})^{N-1}}{a_2} = \sprod{s_1}{a_3}
\label{eq:scalprod2} \\
\sprod{s_1}{b_3} = \scprod{s_1}{X}{a_3} = \sprod{s_N}{a_3}
= \scprod{s_1}{(Z^{\dagger})^{N-1}}{a_3} = \sprod{s_1}{a_4}
\notag 
\end{align}
and so on for the remaining basis vectors and similarly for the other bases in the orbit complete MUB. The second part of the expressions in Equation (\ref{eq:scalprod2}) follow the procedure in Equation (\ref{eq:scalprod1}). Bases in the standard complete MUB are left invariant by $Z$ and $X$, but the arguments given here still apply. 
The scalar products for any two bases (one from the standard complete MUB and one from an orbit complete MUB) can always be rewritten to give the same $N$ values. Subsequently, the distance between any two bases, calculated from summing the scalar products, must be the same.
The complete MUBs are unitarily equivalent so the argument applies equally to two overlapping orbit complete MUBs.

In step two, we use the 2-design property to show that the same distance arises for all pairs of overlapping complete MUBs. The average distance between a fixed basis in one complete MUB and every basis in an overlapping complete MUB must obey
\begin{equation}
1 + N D^{2}_{c,1} = (N+1) \left\langle D^{2}_{c} \right\rangle .
\end{equation}
The first term is the contribution from the basis at which the two complete MUBs overlap (it belongs to both complete MUBs so has a distance equal to 1); the second term is the contribution from the $N$ remaining bases; the final term is the average distance between bases. This leads to
\begin{equation}
D^{2}_{c,1} = \frac{N-1}{N} 
\end{equation}
for the distance between two bases from overlapping complete MUBs.
\end{proof}

We now turn to the second distance, which we prove in a very similar manner.

\begin{theorem}
Two distinct bases lying in non-overlapping orbit complete MUBs are separated by a distance of
\begin{equation*}
D^{2}_{c,2} =  \frac{N-1}{N} - \frac{1}{N^2} .
\end{equation*}
\end{theorem}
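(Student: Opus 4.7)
The plan is to mimic the two-step strategy used in the proof of Theorem 1. Let $A$ and $B$ be two non-overlapping orbit complete MUBs. Each contains exactly one basis from the standard complete MUB $S$; call these $\alpha_A \in A \cap S$ and $\alpha_B \in B \cap S$. The non-overlap hypothesis forces $\alpha_A \neq \alpha_B$, and it also implies that the $N$ remaining bases of $A$ and the $N$ remaining bases of $B$ are Alltop bases that lie in neither $S$ nor in the other orbit complete MUB.

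In step one I would show that $D^2_c(a,b)$ takes a single common value as $a$ ranges over the Alltop bases of $A$ and $b$ over those of $B$. The reduction is the direct analogue of Equations~(\ref{eq:scalprod1}) and~(\ref{eq:scalprod2}) of Theorem 1's proof: within $A$ the operator $Z$ permutes vectors inside any Alltop basis while $X$ cycles through the Alltop bases, and the analogous statement holds in $B$ once $X$ and $Z$ are replaced by their conjugates under the Clifford element that carries $A$ to $B$. Applying these group actions simultaneously to the bra and the ket of every scalar product $\sprod{a^{(k)}}{b^{(l)}}$ shifts the four indices $(i,k,j,l)$ but leaves the multiset of squared magnitudes appearing in Equation~(\ref{eq:distance}) invariant, so the distance does not depend on the Alltop pair chosen.

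Step two is the 2-design calculation. Fix an Alltop basis $a \in A$ and sum $D^2_c(a,b)$ over the $N+1$ bases $b$ of $B$; by the 2-design property this total equals $(N+1)\langle D^2_c \rangle = N$. Exactly one contribution comes from $\alpha_B$, and here Theorem 1 applies directly: $\alpha_B$ lies in $S$, and $A$ and $S$ are overlapping complete MUBs meeting at $\alpha_A \neq \alpha_B$, so $D^2_c(a,\alpha_B) = (N-1)/N$. The remaining $N$ contributions are each equal to $D^2_{c,2}$ by step one, and the identity then collapses to a single linear equation that one solves for $D^2_{c,2}$ to recover the stated expression.

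The main obstacle I expect is step one. The Clifford element sending $A$ to $B$ need not lie in the Weyl-Heisenberg group, so conjugating $X$ and $Z$ produces new operators whose action on the Alltop bases of $B$ must be verified to line up compatibly with the original cyclic action on $A$. Once this symmetry is in hand the rest of the argument is a mechanical rerun of Theorem 1, with the only essential change being that the contribution to the 2-design sum from the shared basis in Theorem 1 is replaced by the overlapping-MUB distance $(N-1)/N$ between $a$ and $\alpha_B$.
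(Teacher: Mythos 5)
Your proposal follows the paper's proof essentially step for step. Step one establishes that all Alltop--Alltop basis pairs from the two complete MUBs are equidistant; the paper does this by inserting the order-$N$ Clifford stabiliser $D_{\bf p}U_{\mathcal{G}}$ of the fixed basis, which permutes the bases of the second orbit complete MUB --- a close cousin of your conjugated-$X$,$Z$ argument, and carrying the same burden of verification that you correctly flag. Step two is exactly the paper's 2-design identity $D^{2}_{c,1} + N D^{2}_{c,2} = (N+1)\left\langle D^{2}_{c}\right\rangle$, with the single non-Alltop contribution valued at $(N-1)/N$ via Theorem 1.

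There is, however, one concrete problem with your final step: solving your own linear equation does \emph{not} recover the stated expression. With $(N+1)\left\langle D^{2}_{c}\right\rangle = N$ and the one exceptional term equal to $\frac{N-1}{N}$, you get
\begin{equation*}
N D^{2}_{c,2} = N - \frac{N-1}{N} \quad\Longrightarrow\quad D^{2}_{c,2} = \frac{N-1}{N} + \frac{1}{N^2},
\end{equation*}
a plus sign where the theorem statement has a minus. The paper carries the same internal inconsistency --- the last displayed equation of its proof reads $\frac{N-1}{N}+\frac{1}{N^2}$ while the theorem and the summary read $\frac{N-1}{N}-\frac{1}{N^2}$ --- and a direct check in $N=3$ (the basis generated by the $x=1$ Alltop fiducial against its image under $U_F$) gives $D^{2}_{c,2} = \frac{7}{9} = \frac{2}{3}+\frac{1}{9}$, confirming that the plus sign is the correct one. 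So your derivation is sound, but your claim that it yields the stated formula is not; you should either correct the sign in your conclusion or note the typo in the target statement rather than assert agreement with it.
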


\begin{proof}
We proceed as before. In step one we show that the distance between a fixed basis from an orbit complete MUB and every other basis from a second orbit complete MUB is always constant. In step two we show that this distance is the same for every pair of orbit complete MUBs.

In step one, we again find that the scalar products between two bases give $N$ distinct values. If we look at a vector in the fixed basis, it will be invariant under an element of the Clifford group $D_{\bf p} U_{\mathcal{G}}$. Acting with this element will permute vectors between bases in the second orbit complete MUB and we can rewrite things to find the same $N$ scalar products between any two bases, as in the previous proof. Consequently, the distance between any two bases from orbit complete MUBs is constant.

In step two, we show that the distance takes the same value for every pair of orbit complete MUBs. Using the 2-design property once more, we know the average distance between a fixed basis in one complete MUB and every basis in a non-overlapping complete MUB must obey
\begin{equation}
D^{2}_{c,1} + N D^{2}_{c,2} = (N+1) \left\langle D^{2}_{c} \right\rangle .
\end{equation}
The first term is the contribution from the basis that lies in two overlapping complete MUBs (i.e. a basis from the standard complete MUB); the second term is the contribution from the $N$ remaining bases; the final term is the average distance between bases. This leads to
\begin{equation}
D^{2}_{c,2} = \frac{N-1}{N} + \frac{1}{N^2}
\end{equation}
for the distance between two bases from non-overlapping orbit complete MUBs.

\end{proof}

\section{Linear dependencies among bases}

The question of whether vectors in WH orbits are linearly dependent is important for signal processing applications where signals are transmitted over lossy channels. It was first proved that a WH orbit always exists where any subset of $N$ vectors are linearly independent for prime dimensions \cite{Pfander} and later for arbitrary dimension \cite{singapore}. Some WH orbits exhibit a high number of linear dependences when the fiducial vectors have certain symmetries \cite{Hoan}.

The complete MUBs in this paper contain many linearly dependent vectors. One quick way of generating dependencies is to use the phase-point operators introduced by Wootters \cite{Wootters}. They include the unitary representation of the symplectic element
\begin{equation}
\mathcal{A} = 
\left(
\begin{array}{cc}
-1 & 0 \\
0 & -1
\end{array}
\right)
\end{equation}
plus its WH translates. The unitary $U_{\mathcal{A}}$ has eigenvalues $+1$ and $-1$ with corresponding eigenspaces of dimension $\frac{1}{2}(N+1)$ and $\frac{1}{2}(N-1)$. One vector from each basis in the standard complete MUB is invariant under the action of this phase-point operator and so lies in the former eigenspace, resulting in $N+1$ linearly dependent vectors. Unitary equivalence ensures there are linearly dependent vectors in the orbit complete MUBs, too.

We focus here on linear dependencies that involve both types of complete MUBs simultaneously. In dimension $N=3k+2$ we find that the $(N-1)$--planes spanned by some sets of linearly dependent vectors in the standard complete MUB are orthogonal to vectors in the orbit complete MUBs. This doesn't happen in dimension $N=3k+1$.

\begin{theorem}
For dimensions $N=3k+2$, we can find a set of $N$ linearly dependent vectors, where each vector lies in a different basis in the standard complete MUB, that is orthogonal to a vector from an orbit complete MUB.
\end{theorem}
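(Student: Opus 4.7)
The plan is to fix a single orbit MUB vector and exhibit, for each of the $N$ non-computational bases in the standard complete MUB, one basis vector in it that is orthogonal to the chosen orbit vector. Since a nonzero vector in $\mathbbm{C}^{N}$ cannot be orthogonal to $N$ linearly independent vectors, the $N$ standard MUB vectors obtained in this way---one from each of $N$ different bases---are automatically linearly dependent, which is exactly what the theorem asserts.

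I would write a generic orbit MUB vector as $D_{{\bf p}}\ket{f_{1}}$ with the Alltop fiducial $\ket{f_{1}}=\frac{1}{\sqrt{N}}\sum_{a}\omega^{a^{3}}\ket{a}$ from Equation (\ref{eq:fiducial}), and a generic non-computational standard MUB vector as $\ket{s,k}=\frac{1}{\sqrt{N}}\sum_{a}\omega^{sa^{2}+ka}\ket{a}$, where $s$ records (half) the power of $U_{\mathcal{S}}$ from Equation (\ref{eq:S}) and $k$ labels the vector inside the basis. Computing the overlap $\scprod{s,k}{D_{\bf p}}{f_{1}}$ and absorbing the overall phase reduces it, up to a nonzero prefactor, to a cubic exponential sum $\sum_{a}\omega^{a^{3}+Aa^{2}+Ba}$, with $A$ and $B$ explicit linear functions of $s,k,p_{1},p_{2}$. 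Because $\gcd(N,6)=1$, the shift $a\mapsto v-A/3$ (reading $1/3$ modulo $N$) completes the cube and converts the sum to the canonical form $\sum_{v}\omega^{v^{3}+B'v}$ with $B'=B-A^{2}/3$.

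The crucial point is that for $N=3k+2$ the map $v\mapsto v^{3}$ is a bijection of $\mathbbm{Z}_{N}$, so $\sum_{v}\omega^{v^{3}}=\sum_{w}\omega^{w}=0$; hence the overlap vanishes exactly when $B'=0$. Since $B'$ depends linearly on $k$ with nonzero coefficient, the equation $B'(s,k,p_{1},p_{2})=0$ has a unique solution $k=k(s)\in\mathbbm{Z}_{N}$ for every non-computational basis index $s$. Carrying this out for each of the $N$ such bases produces $N$ standard MUB vectors, one per basis, all orthogonal to the same orbit vector $D_{\bf p}\ket{f_{1}}$; linear dependence then follows because they all lie in the $(N-1)$-dimensional orthogonal complement of that nonzero vector.

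The main obstacle I anticipate is bookkeeping of phases---the $\tau^{p_{1}p_{2}}$ in the displacement operator, the $\omega^{1/2}$ in Equation (\ref{eq:S}), and the constant term picked up when completing the cube---carefully enough to confirm that $B'$ really is linear in $k$ with invertible coefficient. The same computation explains the dimensional restriction: when $N=3k+1$, cubing is three-to-one on $\mathbbm{Z}_{N}^{\times}$, $\sum_{v}\omega^{v^{3}}$ does not vanish, and no choice of $B'$ produces the required zero simultaneously in every basis. The $N=3$ case uses the ninth-root fiducial (\ref{eq:fiducial_3}) instead and would require a short separate check.
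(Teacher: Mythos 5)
Your proposal is correct, and it rests on the same number-theoretic fact as the paper: cubing is a bijection of $\mathbbm{Z}_N$ precisely when $3 \nmid N-1$, so that $\sum_a \omega^{a^3} = \sum_b \omega^b = 0$ for $N=3k+2$ (the paper reaches this by asking whether $-3$ is a quadratic residue mod $N$; you reach it via $\gcd(3,N-1)=1$ -- these are equivalent for odd primes). Where you genuinely differ is in how the $N$ orthogonal vectors, one per non-computational standard basis, are produced. The paper verifies the single overlap $\sprod{u}{f_x}=0$ for the all-ones Fourier vector $\ket{u}$ and then transports this one relation around using the order-$N$ Clifford element $D_{\bf p}U_{\mathcal{S}}$ that leaves the Alltop fiducial invariant (Section 4): the orbit of $\ket{u}$ under this element visits each of the $N$ non-computational bases while remaining in the orthogonal complement of $\ket{f_x}$. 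You instead compute the general overlap $\scprod{s,k}{D_{\bf p}}{f_1}$ as a cubic exponential sum, complete the cube, and solve $B'(s,k)=0$ for $k$ separately in each basis $s$; the coefficient of $k$ in $B'$ is $-1$, so this always has a unique solution. Your route is more computational but self-contained -- it does not invoke the fiducial-invariance result -- and it exhibits explicitly which vector in each basis does the job; the paper's route is shorter and explains structurally why exactly one such vector per basis appears. Two small points: you only need, and only justify, the sufficiency of $B'=0$ for the vanishing of $\sum_v \omega^{v^3+B'v}$, so the word ``exactly'' should be dropped (whether that sum can vanish for $B'\neq 0$ is a harder question you do not need to answer); and the separate $N=3$ check you flag is moot, since $3$ is not of the form $3k+2$ and the theorem does not cover it.
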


\begin{proof}
We look first at a specific case, namely the vector $\ket{u}$ whose entries are all $\frac{1}{\sqrt{N}}$. It lies in the Fourier basis in the standard complete MUB. Taking the scalar product with an Alltop fiducial vector gives

\begin{equation}
\left\langle u | f_x \right\rangle = \frac{1}{N} \sum_{a=0}^{N-1} \omega^{a^{3}}.
\end{equation}

The two vectors are orthogonal when the sum of the roots of unity vanishes. This requires all terms must be distinct, i.e. $\omega^{a^{3}}$ should never equal $\omega^{b^{3}}$ when $a \neq b$. The relevant equation is
\begin{equation}
a^3 = b^3 \mbox{ mod } N \quad \Leftrightarrow \quad \left( \frac{a}{b} \right)^3 = 1 .
\end{equation}
Substituting $x$ for $\frac{a}{b}$, we need to solve the equation
\begin{equation}
x^3 - 1 = 0 
\end{equation}
which, after rewriting and completing the square, leads to
\begin{equation}
x = \frac{\pm \sqrt{-3} - 1}{2} .
\end{equation}
So the question becomes whether $-3$ is a quadratic residue modulo $N$. In dimension $N=3k+2$ the answer is no \cite{Apostol} and therefore we satisfy the condition $\left\langle u | f_x \right\rangle = 0$.

The Alltop fiducial is invariant under something in the Clifford group of the form $D_{\bf p} U_{\mathcal{S}}$. Acting with this on the vector $\ket{u}$ generates $N$ vectors, each from a different basis in the standard complete MUB (excluding the basis at which the standard and orbit complete MUBs overlap). In dimension $N=3k+2$, these $N$ vectors all lie orthogonally to the Alltop fiducial and are therefore linearly dependent.

\end{proof}

\section{Orbit complete MUBs in even prime power dimensions}

Just as Ivanovic's construction was generalised to the prime power case by Wootters and Fields \cite{WF}, the construction based on Alltop's functions was extended to odd prime power dimensions by Klappenecker and R\"{o}tteler \cite{KR2}. In this section, we give explicit expressions for a fiducial vector that lead to orbit complete MUBs in the two lowest even prime power dimensions.

In prime power dimensions, we have a choice of WH groups. The one we have been using, $H(p)$, can be generalised in two ways, either to
\begin{equation}
H(p^n) \quad \mbox{or} \quad H(p) \times \ldots \times H(p) .
\end{equation}
The second group, sometimes called the extraspecial Heisenberg group \cite{H_group}, is relevant for the MUB problem.
Considered projectively, it still has $N^2$ elements but they are now obtained by taking tensor products of $\mathbbm{1}$, $X$ and $Z$. For $N=2$, this is identical to the WH group, but in higher dimensions it produces a different group.

For $N=2$, the standard complete MUB is well-known and contains the following (unnormalised) vectors.
\begin{equation*}
\begin{bmatrix}
1 & 0 \\
0 & 1 
\end{bmatrix}
\quad
\begin{bmatrix}
1 & 1 \\
1 & -1 
\end{bmatrix}
\quad
\begin{bmatrix}
1 & 1 \\
i & -i
\end{bmatrix} 
\label{eq:standard_set_2}
\end{equation*}
The bases are related by the Fourier matrix, see Equation (\ref{eq:fourier}), and the order 4 unitary matrix
\begin{equation}
\begin{pmatrix}
1 & 0 \\
0 & i
\end{pmatrix} .
\end{equation}
We now ask whether complete MUBs exist that are formed from orbits under the extraspecial Heisenberg group. The answer is yes: acting on the fiducial vector
\begin{equation}
\ket{f} = \left( 1 , \mu \right)^T 
\end{equation}
where $\mu = e^{\frac{i \pi}{4}}$ produces an orbit complete MUB (when combined with the computational basis).
\begin{equation*}
\begin{bmatrix}
1 & 0 \\
0 & 1 
\end{bmatrix}
\quad
\begin{bmatrix}
1 & 1 \\
\mu & \mu^5 
\end{bmatrix}
\quad
\begin{bmatrix}
1 & 1 \\
\mu^7 & \mu^3
\end{bmatrix} 
\label{eq:orbit_set1_2}
\end{equation*}
As in the prime-dimensional case, we can obtain two further orbit complete MUBs by acting with the operators that relate the bases in the standard complete MUB. This gives three orbit complete MUBs in total.

For $N=4$, we have the two fiducial vectors
\begin{equation}
\ket{f_1} = \left( 1,1,1,i \right)^T
\end{equation}
and
\begin{equation}
\ket{f_2} = \left( 1,1,1,-i \right)^T 
\end{equation}
whose orbits under the extraspecial Heisenberg group produce two orbit complete MUBs (when each is combined with the computational basis). We obtain eight further orbit complete MUBs by acting with the same operators that relate the bases in the standard complete MUB. We therefore find ten orbit complete MUBs in total.

For $N=8$, we performed a computer search for fiducials whose components are sixteenth roots of unity but were unable to find any that led to an orbit complete MUB. We leave the question open as to whether a generalisation can be found for dimensions $N=2^k$.

\section{Summary}

We looked at Alltop's complete MUBs---sets of $N+1$ MUBs---in prime dimensions and investigated the role of finite groups in their construction. Specifically, we showed how Alltop's construction leads to several complete MUBs as orbits under the WH group and how these orbit complete MUBs can be generated using abelian subgroups of the Clifford group. This approach removes the reliance on Alltop's fiducial vectors and reveals a structure analogous to the standard complete MUB construction, which uses abelian subgroups of the WH group.

From a geometrical perspective each basis in a complete MUB spans an $(N-1)$--dimensional plane in the space of hermitian matrices with unit trace. We calculated distances between these planes, proving that two planes are separated by a distance of $D^{2}_{c,1} = \frac{N-1}{N}$ when they correspond to bases from overlapping complete MUBs (Theorem 1) and $D^{2}_{c,2} = \frac{N-1}{N} - \frac{1}{N^2}$ when they correspond to bases from non-overlapping complete MUBs (Theorem 2).

We showed that in dimension $N=3k+2$ we can always find a linearly dependent set of $N$ vectors from distinct bases in the standard complete MUB that is orthogonal to a vector in an orbit complete MUB (Theorem 3). Finally, we gave explicit expressions for fiducial vectors of orbit complete MUBs in dimensions $N=2$ and 4. As far as we know, higher dimensional orbit complete MUBs in even prime power dimensions are unknown.

\section*{Acknowledgements}

I am grateful to Ingemar Bengtsson for useful advice and comments. Also, I thank Mark Howard for drawing my attention to the connection to quantum computing in Ref. \cite{Howard} and for sharing his work related to the end of Section 3. Finally, thanks go to two anonymous referees who helped improve this paper.

\section*{Appendix: Orbit complete MUBs and SICs}

The introduction of the Clifford group to the construction of orbit complete MUBs sheds light on a recent observation about symmetric informationally complete positive operator valued measures (SICs). Specifically, it answers a question about unitary equivalence among SICs in dimension 3. A SIC \cite{Marcus, Zauner, Caves, SG} is a collection of $N^2$ unit vectors that obey the equation
\begin{equation}
| \left\langle e_i | e_j \right\rangle |^2 = \frac{1}{N+1} \quad , \quad i \neq j.
\label{eq:SIC}
\end{equation}
There is no general proof of SIC existence, but a well-known conjecture states that in every dimension we can find a fiducial vector, invariant under an order 3 Clifford unitary, whose WH orbit produces a SIC \cite{Marcus, Zauner}. Although this conjecture holds in every dimension so far tested (which by now is around 80 dimensions), it remains a puzzling observation.

These order 3 unitaries appear in the orbit complete MUBs construction, too. As a consequence of the orbit complete MUBs lying in different Clifford orbits in dimension $N = 3k + 1$ \cite{HC}, the Alltop fiducial vectors in Equation (\ref{eq:fiducial}) must be left invariant by more Clifford unitaries than the one used in Equation (\ref{eq:invariance}). The additional unitaries that leave the fiducial invariant are the order 3 elements in the SIC problem. In these dimensions, the Alltop fiducials lie in the same subspaces as the SIC fiducials.

Dimension 3 holds a continuous family of SICs, parametrised by the angle $\phi$ in the fiducial vector
\begin{equation}
\frac{1}{\sqrt{2}}
\left( 
\begin{array}{c}
0 \\
1 \\
-e^{i \phi} \\
\end{array}
\right) .
\end{equation}
The usual way to characterise SICs is to partition them into orbits under the extended Clifford group (the Clifford group plus all anti-unitaries that stabilise the WH group), where, normally, SICs on different orbits are not unitarily equivalent. Up to transformations by the extended Clifford group, it is enough to consider SIC fiducials with $\phi \in \left[ 0 , \frac{2 \pi}{6} \right]$. SICs then lie on orbits of three different lengths: if $\phi=0$ the orbit contains only one SIC, if $\phi=\frac{2 \pi}{6}$ the orbit contains four SICs, and for every other $\phi$ the orbit contains eight SICs \cite{Marcus}. Recently, an unexplained unitary relation between SICs with $\phi$ and $\phi + \frac{2 \pi}{9}$ was found \cite{Zhu}. The explicit form of the unitary transformation is given by
\begin{equation}
U = 
\begin{pmatrix}
1 &  &  \\
 & \sigma^8 &  \\
 &  & \sigma^7 \\
\end{pmatrix}
\label{eq:Zhu}
\end{equation}
where $\sigma = e^{\frac{2 \pi i}{9}}$.

Before we show that this unitary relation appears naturally from the standpoint of the orbit complete MUBs, we first highlight the correspondence between SICs and MUBs in dimension 3. The SIC originating from the fiducial with $\phi=0$ can be obtained from the inflection points of one of the elliptic curves in a family of curves in the complex projective plane called the Hesse pencil \cite{Hughston, Ingemar-Eddington, Hoan}. This leads to a combinatoric structure called the Hesse configuration which singles out twelve vectors in Hilbert space. These are the vectors in the standard complete MUB. The eight SICs on the orbit containing the fiducial with $\phi=\frac{2 \pi}{9}$ are also related to an elliptic curve in the Hesse pencil and they correspond to eight further sets of twelve vectors via the Hesse configuration. These eight sets are the eight orbit complete MUBs.

It is then apparent that there must be a unitary transformation not in the Clifford group that relates the SICs with $\phi=0$ and $\phi = \frac{\pi}{9}$ because they correspond to complete MUBs---the standard one and the eight orbit ones, respectively---that are unitarily equivalent but lie in different Clifford orbits. Note that the unitary transformation between SICs given in Equation (\ref{eq:Zhu}) is precisely the unitary transformation that relates the standard complete MUB and the second orbit complete MUB in dimension 3, i.e. the diagonal components are the components from the fiducial with $x=2$ in Equation (\ref{eq:fiducial_3}).

\end{document}